\newtheorem{definition}{Definition}
\newcommand{\be}{\begin{equation}}
\newcommand{\ee}{\end{equation}}
\newcommand{\dis}{\displaystyle}
\DeclareMathSymbol{\Lambda}{\mathord}{operators}{"03}
\newtheorem{thm}{Theorem}[section]
\newtheorem{prop}[thm]{Proposition}
\newtheorem{theorem}[thm]{Theorem}
\newtheorem{remark}[thm]{\it Remark}
\begin{document}

\title{Idempotent compatible maps and discrete integrable systems on the triangular lattice}

\author[P. Kassotakis]{Pavlos Kassotakis}\footnote{Corresponding author: P. Kassotakis}
\address{Pavlos Kassotakis, Department of Mathematical Methods in Physics, Faculty of Physics,
University of Warsaw, Pasteura 5, 02-093, Warsaw, Poland}
 \email{Pavlos.Kassotakis@fuw.edu.pl, pavlos1978@gmail.com}

\author[M. Nieszporski]{Maciej Nieszporski}
\address{Maciej Nieszporski, Department of Mathematical Methods in Physics, Faculty of Physics,
University of Warsaw, Pasteura 5, 02-093, Warsaw, Poland}
 \email{Maciej.Nieszporski@fuw.edu.pl}

\date{\today}
\subjclass[2020]{37K60, 39A14, 37K10, 16T25}

\date{\today}

\begin{abstract}
We present three equivalence classes of rational non-invertible multidimensional compatible maps.
 These maps turns out to be idempotent and by construction they admit birational partial inverses (companion maps) which are Yang-Baxter maps. 
  The maps in question can be reinterpreted as systems of difference equations defined on the edges of the $\mathbb{Z}^2$ graph.   
 Finally, we associate these compatible systems of difference equations with  integrable difference equations defined on the triangular lattice $Q(A2)$.
\end{abstract}

\maketitle

\section{Introduction}\label{sec1}


The aim of this article is to initiate the study on a certain  class of non-birational maps which are also related to integrable partial difference equations. These maps turn out to be idempotent, hence non-birational, and they are multidimensional compatible, see Section \ref{sec2}. 
 An example of an idempotent multidimensional compatible map, which is introduced in this article and it is referred to as $Q_I$, reads
 \begin{align*}
 Q_I:(x,y)\mapsto & \left((p-q)\frac{x}{x-y},(p-q)\frac{y}{x-y}\right),
 \end{align*}
where $x,y,p,q\in \mathbb{CP}^1,$ and as we shall see it is related to partial difference equation
\begin{align}
(\phi_{m+1,n+1}+p_m)(\phi_{m,n+1}+q_n)=&(\phi_{m+1,n+1}+q_n)(\phi_{m+1,n}+p_n),& m,n\in&\mathbb{Z}
\end{align}
that serves as the discrete analogue of the Burgers equation \cite{Levi:1983}, c.f. \cite{Zhang:2021,Zhang:2022}.

The companion map, known also as  partial inverse (see Section \ref{sec2}) of $Q_I$, is a birational map that serves as a set theoretical solution of the Yang-Baxter equation and explicitly reads
\begin{align*}
R_{I}:(x,y) \mapsto &  \left(\frac{x y}{x+q-p},x+q-p\right).
\end{align*}
The role of idempotency in the context of set theoretical solutions of the Yang-Baxter  equation, has been  studied in \cite{Atkinson2013,Colazzo2023}, while idempotent set theoretical solutions of the pentagon  equation have been recently obtained in \cite{Mazzotta2024}. In this article we investigate  the role of idempotency in the theory of discrete integrable systems. 

 The article is organised as follows. 
 In Section \ref{sec2}, we classify up to Möbius transformations rational non-invertible  compatible maps of a specific form. These maps fall into three equivalence classes of rational idempotent multidimensional compatible maps. Each class admits birational partial inverses which satisfy the Yang-Baxter equation. In Section \ref{sec3}, we demonstrate that these maps correspond to linearizable partial difference equations defined on the edges of the
$\mathbb{Z}^2$  graph. We then associate these edge-variable equations with vertex-variable equations defined on a triangular stencil and we obtain  difference systems in vertex variables  defined on  “black” and ``white" triangles of
triangular $Q(A2)$ lattices. 

\section{Idempotent compatible maps} \label{sec2}

\subsection{ Yang-Baxter  and $3D-$compatible quadrirational maps  } \label{section2}
 
 Let $\mathcal{X}$ be any set.
 \begin{definition}[$3D-$compatible maps \cite{ABS:YB}]\label{Def1}
 Let $Q: \mathcal{X} \times \mathcal{X}\ni({\bf x},{\bf y})\mapsto ({\bf u}, {\bf v})=(f({\bf x},{\bf y}),g({\bf x},{\bf y})) \in \mathcal{X} \times \mathcal{X},$ be a map and  $Q_{ij}$  $i\neq j\in\{1,2,3\},$ be the maps that act as $Q$ on the $i-$th and $j-$th factor of $\mathcal{X} \times \mathcal{X}\times \mathcal{X}$ and as identity to the remaining factor.  In detail we have
\begin{align*}
Q_{12}:({\bf x},{\bf y},{\bf z})\mapsto( {\bf x}_2, {\bf y}_1,{\bf z})=(f({\bf x},{\bf y}),g({\bf x},{\bf y}),{\bf z}),\\
Q_{13}:({\bf x},{\bf y},{\bf z})\mapsto( {\bf x}_3,{\bf y},{\bf z}_1)=(f({\bf x},{\bf z}),{\bf y},g({\bf x},{\bf z})),\\
Q_{23}:({\bf x},{\bf y},{\bf z})\mapsto({\bf x}, {\bf y}_3, {\bf z}_2)=({\bf x},f({\bf y},{\bf z}),g({\bf y},{\bf z})).
\end{align*}

The map $Q: \mathcal{X} \times \mathcal{X}\rightarrow \mathcal{X} \times \mathcal{X}$ will be called {\em 3D-compatible} or {\em 3D-consistent map}  if it holds
${{\bf x}_{23}}={ {\bf x}_{32}},$  ${ {\bf y}_{13}}={{\bf y}_{31}},$ ${{\bf z}_{12}}={{\bf z}_{21}},$ that is
\begin{align}\label{3d:comp:def1}
f( {\bf x}_3,{\bf y}_3)=f({\bf x}_2,{\bf z}_2),&&g( {\bf x}_3, {\bf y}_3)=f( {\bf y}_1, {\bf z}_1),&&g( {\bf x}_2, {\bf z}_2)=g( {\bf y}_1,{\bf z}_1).
\end{align}
\end{definition}

\begin{definition}[Yang-Baxter maps \cite{Sklyanin:1988,Drinfeld:1992}]
A map $R: \mathcal{X} \times \mathcal{X}\ni({\bf x},{\bf y})\mapsto ({\bf u}, {\bf v})=(s({\bf x},{\bf y}),t({\bf x},{\bf y}))\in \mathcal{X}\times \mathcal{X},$ will be called a {\em Yang-Baxter map} if it satisfies 
\begin{align} \label{YANG_BAXTER}
R_{12}\circ R_{13}\circ R_{23}= R_{23}\circ R_{13}\circ R_{12},
\end{align}
where $R_{ij}$ $i\neq j\in\{1,2,3\},$ denotes the maps that act as  $R$ on the $i-$th and the $j-$th factor of $\mathcal{X}\times \mathcal{X}\times \mathcal{X},$ and as identity to the remaining factor.
\end{definition}
 Note that the term {\em Yang-Baxter maps} was introduced in \cite{Bukhshtaber:1998,Veselov:20031}. For recent developments on Yang-Baxter maps we refer to \cite{Kass2,Buchstaber_2020,Doikou_2021,Kassotakis:2:2021,Kassotakis:2022b,KONSTANTINOURIZOS2024,KASSOTAKIS:2025,Vincent:2025}.


\begin{definition}[Birational maps]
An invertible   map $R: \mathcal{X} \times \mathcal{X}\ni({\bf x},{\bf y})\mapsto ({\bf u},{\bf v}) \in \mathcal{X} \times \mathcal{X}$   will be called {\em birational}, if both the map $R$ and its inverse $R^{-1}: \mathcal{X} \times \mathcal{X}\ni({\bf u},{\bf v})\mapsto ({\bf x},{\bf y}) \in \mathcal{X} \times \mathcal{X},$ are rational maps.
\end{definition}

\begin{definition}[Quadrirational maps and their companion maps\footnote{
In \cite{Kakei2010} the companion maps are referred to as {\em solitonic maps}.
} \cite{Etingof_1999,ABS:YB}]
A map $R: \mathcal{X} \times \mathcal{X}\ni({\bf x},{\bf y})\mapsto ({\bf u},{\bf v}) \in \mathcal{X} \times \mathcal{X}$   will be called {\em quadrirational}, if both the map $R$ and the so-called {\em companion map} $R^c: \mathcal{X} \times \mathcal{X}\ni({\bf x},{\bf v})\mapsto ({\bf u},{\bf y}) \in \mathcal{X} \times \mathcal{X},$ are birational maps.
\end{definition}

In order to visualize quadrirational maps, we can assign the elements of the set  $\mathcal{X}$ on edges of elementary quads of the $\mathbb{Z}^2$ graph (see Figure \ref{fig1}). This representation has its origins in the theory of discrete integrable systems on quad-graphs \cite{Nijhoff:2002,bs:2002,ABS:YB}. In addition this representations  turns  convenient in the proof of the following proposition. This proposition although it  concerns quadrirational maps some of its points remain true in the non-quadrirational case. 

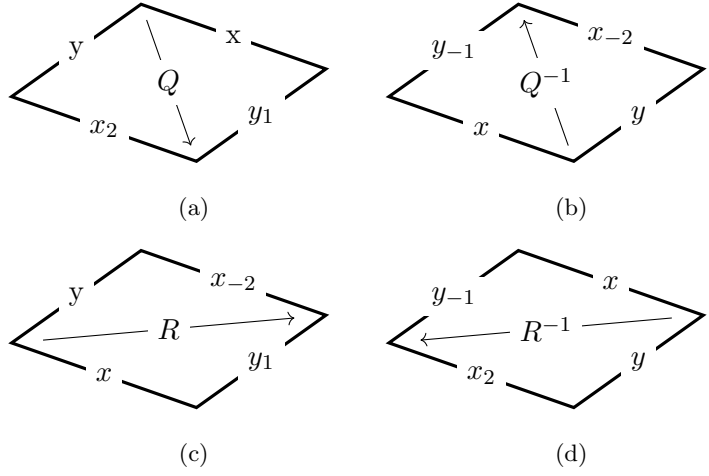
\begin{figure}[htb]
\begin{minipage}{.3\textwidth}
\begin{tikzpicture}[
		tdplot_main_coords,
		grid/.style={very thin,gray},
		axis/.style={->,blue,thick},
cube/.style={very thick},
		mynode/.style={circle,fill=green,minimum width=5pt,inner sep=3pt,outer sep=3pt}]
\draw[cube] (0,0,0)--(3,0,0)node[midway,fill=white]{y}--(3,3,0)node[midway,fill=white]{$x_2$}--(0,3,0)node[midway,fill=white]{$y_1$}--cyclenode[midway,fill=white]{x};
\draw[->] (0.3,0.3,0)--(2.7,2.7,0)node[midway,fill=white]{$Q$};
\end{tikzpicture}
\captionsetup{font=footnotesize}
\captionof*{figure}{(a)}
\end{minipage}
\begin{minipage}{.3\textwidth}
\begin{tikzpicture}[
		tdplot_main_coords,
		grid/.style={very thin,gray},
		axis/.style={->,blue,thick},
cube/.style={very thick},
		mynode/.style={circle,fill=green,minimum width=5pt,inner sep=3pt,outer sep=3pt}]
\draw[cube] (0,0,0)--(3,0,0)node[midway,fill=white]{$y_{-1}$}--(3,3,0)node[midway,fill=white]{$x$}--(0,3,0)node[midway,fill=white]{$y$}--cyclenode[midway,fill=white]{$x_{-2}$};
\draw[<-] (0.3,0.3,0)--(2.7,2.7,0)node[midway,fill=white]{$Q^{-1}$};
\end{tikzpicture}
\captionsetup{font=footnotesize}
\captionof*{figure}{(b)}
\end{minipage}\\ \vspace{0.3cm}
\begin{minipage}{.3\textwidth}
\begin{tikzpicture}[
		tdplot_main_coords,
		grid/.style={very thin,gray},
		axis/.style={->,blue,thick},
cube/.style={very thick},
		mynode/.style={circle,fill=green,minimum width=5pt,inner sep=3pt,outer sep=3pt}]
\draw[cube] (0,0,0)--(3,0,0)node[midway,fill=white]{y}--(3,3,0)node[midway,fill=white]{$x$}--(0,3,0)node[midway,fill=white]{$y_1$}--cycle node[midway,fill=white]{$x_{-2}$};
\draw[->] (2.7,0.3,0)--(0.3,2.7,0)node[midway,fill=white]{$R$};
\end{tikzpicture}
\captionsetup{font=footnotesize}
\captionof*{figure}{(c)}
\end{minipage}
\begin{minipage}{.3\textwidth}
\begin{tikzpicture}[
		tdplot_main_coords,
		grid/.style={very thin,gray},
		axis/.style={->,blue,thick},
cube/.style={very thick},
		mynode/.style={circle,fill=green,minimum width=5pt,inner sep=3pt,outer sep=3pt}]
\draw[cube] (0,0,0)--(3,0,0)node[midway,fill=white]{$y_{-1}$}--(3,3,0)node[midway,fill=white]{$x_{2}$}--(0,3,0)node[midway,fill=white]{$y$}--cycle node[midway,fill=white]{$x$};
\draw[<-] (2.7,0.3,0)--(0.3,2.7,0)node[midway,fill=white]{$R^{-1}$};
\end{tikzpicture}
\captionsetup{font=footnotesize}
\captionof*{figure}{(d)}
\end{minipage}
\caption{(a):Mapping $Q:({\bf x},{\bf y})\mapsto \left({\bf x_2},{\bf y_1}\right),$ assigned on an elementary quad of the $\mathbb{Z}^2$ graph. (b): The inverse map $Q^{-1}$ of $Q$. (c): The companion map $R,$ associated with mapping $Q$. (d): The inverse map $R^{-1}$ of the companion map $R$.}      \label{fig1}
\end{figure}

\begin{prop} \label{1st_prop}
Let $Q:({\bf x},{\bf y})\mapsto \left(f({\bf x},{\bf y}),g({\bf x},{\bf y})\right)$ be a quadrirational map. Then the following statements are equivalent
\begin{enumerate}
\item Mapping $Q$ is a $3D-$compatible map;
\item The inverse map $Q^{-1}$ of $Q$ is a $3D-$compatible map;
\item The companion map $R:=Q^c$ is a Yang-Baxter map;
\item The inverse $R^{-1}$ of the companion map $R$ is a Yang-Baxter map;
\item It holds
\begin{align*}
R_{12}\circ Q_{13}\circ Q_{23}=Q_{23}\circ Q_{13}\circ R_{12};
\end{align*}
\item It holds
\begin{align*}
R^{-1}_{12}\circ Q^{-1}_{13}\circ Q^{-1}_{23}=Q^{-1}_{23}\circ Q^{-1}_{13}\circ R^{-1}_{12};
\end{align*}
\item It holds
\begin{align*}
Q_{12}\circ Q_{13}\circ R^{-1}_{23}=R^{-1}_{23}\circ Q_{13}\circ Q_{12};
\end{align*}
\item It holds
\begin{align*}
Q^{-1}_{12}\circ Q^{-1}_{13}\circ R_{23}=R_{23}\circ Q^{-1}_{13}\circ Q^{-1}_{12};
\end{align*}
\item It holds
\begin{align*}
Q_{12}\circ R_{13}\circ Q^{-1}_{23}=Q^{-1}_{23}\circ R_{13}\circ Q_{12};
\end{align*}
\item It holds
\begin{align*}
Q^{-1}_{12}\circ R^{-1}_{13}\circ Q_{23}=Q_{23}\circ R^{-1}_{13}\circ Q^{-1}_{12}.
\end{align*}
\end{enumerate}
\end{prop}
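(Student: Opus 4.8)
As anticipated by Figure~\ref{fig1}, the natural arena is not a single quad but an elementary cube of $\mathbb{Z}^3$, whose twelve edges I would label by elements of $\mathcal{X}$: put ${\bf x},{\bf y},{\bf z}$ on the three edges through a chosen vertex $O$ (in directions $1,2,3$) and use the shift notation ${\bf x}_i,\,{\bf x}_{ij},\dots$ for the remaining nine. Each of the six faces is an elementary quad, and quadrirationality of $Q$ says that on any face any two edges meeting at a corner determine the two edges meeting at the opposite corner, the correspondence being realised by exactly one of the four single-valued rational maps $Q,\,Q^{-1},\,R:=Q^c,\,R^{-1}$ according to which directed diagonal of the quad is traversed. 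The maps $Q_{ij}$ of Definition~\ref{Def1} are precisely the action of $Q$ on the three faces adjacent to $O$. (A purely computational alternative is to expand every composition in (3)--(10) using the defining relation of $R$; the route below is shorter and explains the structure.)

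\textbf{Step 1: the cube lemma, and (1)$\Leftrightarrow$(2).} Applying $Q$ on the three faces through $O$ determines nine of the twelve edges; the three that remain meet at the antipodal vertex $\bar O$, and each can then be computed from either of the two $\bar O$-faces containing it. A direct rewriting shows that the three resulting coincidences are exactly the three equations \eqref{3d:comp:def1}. Hence (1) is equivalent to the statement that the whole cube can be filled consistently starting from $({\bf x},{\bf y},{\bf z})$. Swapping $O$ with $\bar O$ is a symmetry of this picture: it replaces $Q$ by $Q^{-1}$ on every face (and $R$ by $R^{-1}$) and preserves consistent fillability, which yields (1)$\Leftrightarrow$(2) immediately.

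\textbf{Step 2: reading off (3)--(10).} Each composition appearing in (3)--(10) is, when read on the cube, a recipe for transferring the three edges at $O$ to the three edges at $\bar O$, visiting three faces in some order and using on each face the one of $Q,\,Q^{-1},\,R,\,R^{-1}$ dictated by the diagonal being traversed. The two sides of any one identity visit the same three faces in opposite orders, so by Step~1 they agree on all inputs precisely when the cube fills consistently, i.e. iff (1) holds. In particular (3) is the Yang-Baxter equation \eqref{YANG_BAXTER} for $R$ --- the case in which the companion diagonal is used on all three faces --- and (4) is its image under $O\leftrightarrow\bar O$; items (5)--(10) are the mixed cases, obtained by toggling, one factor at a time, between $Q$ and $R$ and between a map and its inverse. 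Each toggle is admissible exactly because quadrirationality keeps the relevant face map single-valued in the needed direction; this is also where the equivalences would break down if $Q$ were not birational, matching the caveat stated before the proposition.

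\textbf{Step 3 and the main obstacle.} What remains is the bookkeeping: for each identity, match the input and output pair of every factor with a definite pair of edges on a definite face, verify that the two sides sweep the three faces in reversed orders, and appeal to Step~1. This is routine but voluminous. The one non-formal ingredient --- and the part I expect to carry the weight --- is the repeatedly used passage ``the quad relation holds on this face, hence I may apply whichever of $Q,\,Q^{-1},\,R,\,R^{-1}$ I need here'': it rests on $Q$, and therefore $Q^{-1}$, $R$, $R^{-1}$, being honest single-valued maps, so the write-up must check at each use that the face map is being applied in a direction in which it is defined and its value is uniquely determined.
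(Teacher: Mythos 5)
Your proposal follows essentially the same route as the paper's own proof: the appendix likewise places the variables on the edges of an elementary cube, identifies $Q,\,Q^{-1},\,R,\,R^{-1}$ with the four diagonal transfers on each face, and reduces each identity to consistent fillability of the cube (citing Adler--Bobenko--Suris for (1)$\Leftrightarrow$(3), verifying (1)$\Leftrightarrow$(4) by chasing the two chains $R^{-1}_{23}\circ R^{-1}_{13}\circ R^{-1}_{12}$ and $R^{-1}_{12}\circ R^{-1}_{13}\circ R^{-1}_{23}$ across the cube in two figures, and leaving the remaining items to ``a similar manner''). One small correction to your Step 2: for the Yang--Baxter-type identities the input is not the triple of edges at $O$ but a three-edge staircase path joining $O$ to $\bar O$ (as the paper's figures show), which is precisely the bookkeeping your Step 3 defers.
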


\begin{proof}
The proof is presented in Appendix \ref{appa}.
\end{proof}

\begin{remark} \label{rema_000}
 Proposition \ref{1st_prop}  states that   mappings $Q,$ and $R$ and their inverses {\em entwine} with each other, that is they serve as solutions of the entwining Yang-Baxter equation. The entwining Yang-Baxter equation reads
 \begin{align} \label{eyb}
R_{12}^{(1)}\circ R_{13}^{(2)}\circ R_{23}^{(3)}= R_{23}^{(3)}\circ R_{13}^{(2)}\circ R_{12}^{(1)},
\end{align}
where the superscripts denote maps that might differ. For further discussions and for the first reported solutions of (\ref{eyb}) we refer to \cite{Kouloukas:2011}, see also \cite{Kassotakis:2019,Kels:2019II,Konstantinou-Rizos_2019} for further developments.
\end{remark}

\subsection{Idempotent $3D-$compatible maps}
In this article we focus on $3D-$compatible maps which are not quadrirational. Namely, we consider  M\"ob equivalence classes of maps (see (\ref{mob_eq}) below) that admit a representative of the form 
\begin{align}\label{2:0}
Q:(x,{\bf p};y,{\bf q}) \mapsto & (u,{\bf p};v,{\bf q}),
\end{align}
 where $x,y,u,v,p^i,q^i\in \mathbb{CP}^1$ and $u,v$ are implicitly defined by
\begin{align}  \label{2:1}
u=&v,& F(u,y,{\bf p},{\bf q})=&F(v,x,{\bf p},{\bf q}),
\end{align}
for a rational function $F.$ It is clear that the first relation in (\ref{2:1}), guarantees that mapping $Q$  is not invertible, hence not quadrirational. 
We are searching for maps $Q$ of the form (\ref{2:0}), (\ref{2:1}) which are rational and $3D-$compatible. 
Here we investigate only the  case where the function $F$ is of multiplicative or additive separable form, that is
\begin{align*}
F(x,y,{\bf p},{\bf q}):=\frac{p^1x+p^2}{p^3x+p^4}\;\frac{q^1y+q^2}{q^3y+q^4},& &\mbox{or}& & F(x,y,{\bf p},{\bf q}):=&\frac{p^1x+p^2}{p^3x+p^4}+\frac{q^1y+q^2}{q^3y+q^4}.
\end{align*}
Studies on the relation of functions in separable form and discrete integrability, can be found in \cite{Kassotakis_2011,Kassotakis_2012,Kassotakis_2018,Kassotakis:2019}. Furthermore, we recall (see \cite{Papageorgiou:2010}), that change of variables implemented by families of bijections $\phi({\bf p}): \mathbb{CP}^1\rightarrow \mathbb{CP}^1, $ $ p^i\in \mathbb{CP}^1,$ respect the $3D-$compatibility of a map $Q$. Therefore, maps $Q$ and $\widetilde Q$ which are related by
\begin{align}\label{mob_eq}
\left(\phi({\bf p})\times \phi({\bf q})\right)\circ \widetilde Q=Q\circ \left(\phi({\bf p})\times \phi({\bf q})\right),
\end{align}
 will be called M\"ob equivalent.

We arrive at the following Theorem.




\begin{theorem} \label{theo1}
Any rational  $3D-$compatible map of type (\ref{2:0}), (\ref{2:1}), for separable function $F$ of the form $ F(x,y,{\bf p},{\bf q}):={\dis\frac{px-P}{\delta x-1}\;\frac{qy-Q}{\delta y-1}},$
  is M\"ob equivalent to exactly one of the following two maps:
\begin{align*}
u=&(P-Q)\frac{x}{x-y},&v=&(P-Q)\frac{y}{x-y},& (Q_I),\\
u=&\frac{q(x-y)+Q-P}{q-p},&v=&\frac{p(x-y)+Q-P}{q-p},& (Q_{II}),
\end{align*}
while for seperable function $F$ of the form $F(x,y,{\bf p},{\bf q}):={\dis\frac{px-P}{\delta x-1}+\frac{qy-Q}{\delta y-1}},$ it is M\"ob equivalent  to the map
\begin{align*}
u=&q\frac{x-y}{q-p},&v=&p\frac{x-y}{q-p}.& (Q_{III})
\end{align*}
\end{theorem}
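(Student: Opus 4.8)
The plan is to turn the implicit definition (\ref{2:1}) into an explicit rational map, feed it into the 3D-compatibility relations (\ref{3d:comp:def1}), and reduce those to algebraic constraints on the M\"obius data that can be solved outright.

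First I would solve the two relations of (\ref{2:1}) for the outputs $u,v$. Since $F$ is separable --- a product $\mu_{\mathbf p}(x)\,\nu_{\mathbf q}(y)$ of a fractional-linear function of $x$ (with coefficients from $\mathbf p$) and one of $y$ (with coefficients from $\mathbf q$) in the multiplicative case, and the corresponding sum $\mu_{\mathbf p}(x)+\nu_{\mathbf q}(y)$ in the additive case --- each relation is linear-fractional in its unknown and solves in closed form. This presents $Q$ as a rational map $(x,y)\mapsto(u,v)$ whose coefficients are polynomial in $x,y$ and in the entries of $\mathbf p,\mathbf q$ (and in the common modulus $\delta$), and it makes the non-invertibility manifest, the image being the curve cut out by the first relation of (\ref{2:1}).

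Next I would impose (\ref{3d:comp:def1}) on the triple $(x,\mathbf p;y,\mathbf q;z,\mathbf r)$, the elementary maps $Q_{12},Q_{13},Q_{23}$ feeding into $F$ the pair of parameter vectors attached to the corresponding two slots. Clearing denominators turns each of the three scalar equations into a polynomial identity in $(x,y,z)$ that must hold identically, and equating the coefficients of the monomials in $(x,y,z)$ converts them into a finite, strongly overdetermined polynomial system for $(p,P,q,Q,r,R,\delta)$ that has to be satisfied for all values of the three parameter vectors; separability lets this system be organised line by line, which keeps the bookkeeping manageable. Solving it --- and using up the residual M\"obius freedom left over by the normal form of $F$ already fixed in the statement via (\ref{mob_eq}) --- the solution set should split into exactly two branches in the multiplicative case and one branch in the additive case; on each branch I would back-substitute into the formula of the first step and apply one further parameter-dependent M\"obius change of the edge variables to arrive at $Q_I$, $Q_{II}$, respectively $Q_{III}$. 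An alternative route, consistent with the emphasis of the introduction: such a $Q$ has a birational companion map which, by the mechanism behind Proposition \ref{1st_prop}, is a Yang--Baxter map; matching it against the known classification of quadrirational Yang--Baxter maps and taking partial inverses of the admissible entries would reproduce the same three classes.

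Finally I would record the converse and the sharpness: a direct check that $Q_I,Q_{II},Q_{III}$ are rational and 3D-compatible, the observation that rationality is exactly what discards the non-rational branches of the parameter system, and a verification that no two of the three are M\"ob-equivalent (conveniently, via their companion Yang--Baxter maps, which sit at distinct points of that classification). The hard part will be the middle step: keeping the large polynomial system from 3D-compatibility tractable and --- above all --- proving that the enumeration of branches is exhaustive, not merely recovering the three solutions one already expects; a secondary, purely organisational difficulty is tracking the leftover M\"obius freedom precisely enough to be sure that the three representatives are genuinely distinct and that each equivalence class meets the list in exactly one point.
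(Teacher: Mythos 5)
Your overall frame --- make the map explicit, then extract the classification from the compatibility equations (\ref{3d:comp:def1}) viewed as an overdetermined polynomial system in the parameters --- does not match where the actual content of the theorem lies, and the middle step as you describe it would come back empty. For the stated form of $F$ the two factors share the denominators $\delta x-1$, $\delta y-1$, so after setting $u=v$ the second relation of (\ref{2:1}) is \emph{linear} in $u$ and the map is rational for all values of $(p,P,q,Q,\delta)$; the paper's first and essential observation is that this common-denominator condition is precisely what rationality forces on a general separable $F$ (for generic $p^3,p^4,q^3,q^4$ the relation is quadratic in $u$, and one needs its discriminant to be a perfect square, which gives $p^3q^4=p^4q^3$). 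More importantly, once rationality holds, $3D$-compatibility is satisfied identically in the parameters: the paper reduces every such map by (\ref{mob_eq}) to the normal form (\ref{q1}) (resp.\ its additive analogues) and then verifies that these are $3D$-compatible for every parameter value. So the polynomial system you propose to solve has the whole parameter space as its solution set and produces no branches; the trichotomy $Q_I/Q_{II}/Q_{III}$ comes entirely from the M\"obius reduction and its degenerations (indeed $Q_{II}$ is a coalescence limit of $Q_I$), which is exactly the part you defer as ``the hard part''. That reduction --- an explicit chain of substitutions and reparametrizations of the edge variables and parameters --- \emph{is} the proof; it cannot be recovered from the compatibility equations.

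Two smaller points. Your fallback route through ``the known classification of quadrirational Yang--Baxter maps'' is not available here: the companion maps $R_{I\text{--}III}$ are birational Yang--Baxter maps but they are not quadrirational, since their own companions (the maps $Q_{I\text{--}III}$ themselves) are not invertible, so they do not occur in that classification. Finally, for the distinctness of the three classes the paper uses a much lighter invariant than the one you suggest: the singularity sets $\Sigma_{Q_I}$, $\Sigma_{Q_{II}}$, $\Sigma_{Q_{III}}$ are pairwise different, and M\"obius equivalence preserves them.
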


\begin{proof}
Consider the pair of relations  (\ref{2:1})  for $ F(x,y,{\bf p},{\bf q}):={\dis\frac{p^1x+p^2}{p^3x+p^4}\;\frac{q^1y+q^2}{q^3y+q^4}}.$ Substituting $v=u,$ the second relation of (\ref{2:1}) becomes a quadratic polynomial equation in $u$. It is easy to see that the discriminant of this quadratic equation is a perfect square provided that it holds $p^3q^4=p^4q^3.$ This leads to $p^4=\delta p^3,$ $q^4=\delta q^3,$  so  $ F(x,y,{\bf p},{\bf q})= {\dis\frac{p^1x-p^2}{p^3(\delta x-1)}\;\frac{q^1y-q^2}{q^3(\delta y-1)}}$ and (\ref{2:1}) defines a rational map. This rational map, by setting $p:=p^1/p^3,P:=p^2/p^3,$ $q:=q^1/q^3,Q:=q^2/q^3,$ followed by the change of variables $(x,y)\mapsto\left(\frac{x-P}{\delta  x-p},\frac{y-Q}{\delta y-q}\right),$  and under the re-parametrization $(p,q)\mapsto (p+\delta P,q+\delta Q)$  becomes
$Q:(x,p,P;y,q,Q) \mapsto (u,p,P;v,q,Q),$
 where $\frac{p}{u-P}=\frac{q}{v-Q},$ $uy=vx.$
Re-defining the parameters as $(p,q,P,Q)\mapsto (1/p,1/q,P/p,Q/q)$  we get
\begin{align*}
Q:(x,p,P;y,q,Q) \mapsto & (u,p,P;v,q,Q),
\end{align*}
 where
\begin{align}  \label{q1}
p u-P=&q v-Q,& uy=&vx,
\end{align}
that is a rational map that after the scaling $(x,y)\mapsto (x/p,y/q)$ becomes exactly $Q_I.$

The proof that mapping $Q_I$ is $3D$-compatible follows by direct computation. Indeed, from
\begin{align*}
(Q_I)_{12}:(x,p,P;y,q,Q;z,r,R) \mapsto & (x_2,p,P;y_1,q,Q;z,r,R),\\
(Q_I)_{13}:(x,p,P;y,q,Q;z,r,R) \mapsto & (x_3,p,P;y,q,Q;z_1,r,R),\\
(Q_I)_{23}:(x,p,P;y,q,Q;z,r,R) \mapsto & (x,p,P;y_3,q,Q;z_2,r,R),
\end{align*}
where
\begin{align} \label{c1}
\begin{aligned}
x_2-P=&y_1-Q, & x_2y=&y_1x,\\
x_3-P=&z_1-R,& x_3z=&z_1x,\\
y_3-Q=&z_2-R,  & y_3z=&z_2y,
\end{aligned}
\end{align}
we have
\begin{align} \label{cc1}
\begin{aligned}
x_{23}-P=&y_{13}-Q,& x_{23}y_3=&y_{13}x_3,\\
x_{32}-P=&z_{12}-R, &x_{32}z_2=&z_{12}x_2,\\
y_{31}-Q=&z_{21}-R, &y_{31}z_1=&z_{21}y_1.
\end{aligned}
\end{align}
From (\ref{c1}) and (\ref{cc1}) it can be  shown that $x_{23}=x_{32},$ $y_{13}=y_{31},$ $z_{12}=z_{21},$ that results that mapping $Q_I$ is a $3D$-compatible map.

Considering the pair of relations  (\ref{2:1})  for $ F(x,y,{\bf p},{\bf q}):={\dis\frac{p^1x+p^2}{p^3x+p^4}+\frac{q^1y+q^2}{q^3y+q^4}},$ following the same analysis as above we obtain the rational map
\begin{align*}
Q:(x,p,P;y,q,Q) \mapsto & (u,p,P;v,q,Q),
\end{align*}
 where
\begin{align}  \label{q2}
pu-P=&qv-Q,& u+y=&v+x,
\end{align}
and the map
\begin{align*}
Q:(x,p,P;y,q,Q) \mapsto & (u,p,P;v,q,Q),
\end{align*}
 where
\begin{align}  \label{q3}
pu=&qv,& u+y=&v+x.
\end{align}
Mappings (\ref{q2}), respectively, (\ref{q3}) are $3D$-compatible rational maps  that coincide  with $Q_{II},$ respectively $Q_{III}.$

Note that the sets of singular points of $Q_I, Q_{II}$ and $Q_{II}$ respectively are
\begin{align*}
  \Sigma_{Q_I} =&\{(0,0),(\infty,\infty)\},& \Sigma_{Q_{II}} =&\{(\infty,\infty)^2\},& \Sigma_{Q_{III}} =&\{(\infty,\infty)\},
\end{align*}
that guarantees that these three maps are not M\"ob equivalent but serve as  representatives of three different equivalence classes of $3D-$ compatible maps.
\end{proof}

The following remarks are in order.
\begin{enumerate}[label=(\roman*)]
\item Mappings $Q_I, Q_{II}$ and $Q_{III},$ are idempotent, that is $Q_{I-III}^2=Q_{I-III}.$
\item Idempotency is being preserved under the equivalence relation (\ref{mob_eq}).  Indeed, Let $Q$ is an idempotent map and $ \widetilde Q:=\left(\phi^{-1}({\bf p})\times \phi^{-1}({\bf q})\right)\circ Q\circ \left(\phi({\bf p})\times \phi({\bf q})\right)$ a M\"ob equivalent map to $Q$. Then the idempotency property  $Q^2=Q$ implies  
    \begin{align*}
    \left(\phi^{-1}({\bf p})\times \phi^{-1}({\bf q})\right)\circ (Q^2-Q)\circ \left(\phi({\bf p})\times \phi({\bf q})\right)=0,
    \end{align*}
    or
    \begin{gather*}
    \left(\phi^{-1}({\bf p})\times \phi^{-1}({\bf q})\right)\circ Q^2\circ \left(\phi({\bf p})\times \phi({\bf q})\right)-\left(\phi^{-1}({\bf p})\times \phi^{-1}({\bf q})\right)\circ Q\circ \left(\phi({\bf p})\times \phi({\bf q})\right)=0,
    \end{gather*}
    or $\widetilde Q^2-\widetilde Q=0,$
    so $\widetilde Q$ is an idempotent map.
\item Mapping $Q_{II}$ can be obtained from $Q_I$ by a coalescence procedure. Indeed, starting with (\ref{q1}) that is M\"ob equivalent to $Q_I$, by setting
\begin{align*}
(x,y,u,v,p,q,P,Q)\mapsto \left(1+\epsilon x,1+\epsilon y,1+\epsilon u,1+\epsilon v,\epsilon^2+p,\epsilon^2+q,p+\epsilon P,q+\epsilon Q\right),
\end{align*}
and then sending $\epsilon \rightarrow 0,$ we obtain exactly $Q_{II}.$ The map $Q_{III}$ is obtained from $Q_{II}$ by setting $P=Q=0.$
\item  A map 
    \begin{align}\label{2:0_c}
\widehat Q:(u,{\bf p};v,{\bf q}) \mapsto & (x,{\bf p};y,{\bf q}),
\end{align}
 where $x,y,u,v,p^i,q^i\in \mathbb{CP}^1$ and $u,v$ are implicitly defined by
\begin{align}  \label{2:1_c}
x=&y,& F(u,y,{\bf p},{\bf q})=&F(v,x,{\bf p},{\bf q}),
\end{align}
for a rational function $F,$ will be called the complementary map of the map $Q$ which is defined by (\ref{2:0}),(\ref{2:1}).

Mappings $Q_{I-III}:(x,y)\mapsto (u,v),$ are accompanied by the {\em complementary maps} $\widehat Q_{I-III}:(u,v)\mapsto (x,y),$ where $x,y$ are rational  functions of $u,v$  respectively defined by (\ref{q1}), (\ref{q2}) and (\ref{q3}). It turns out that $Q_{I-III}\equiv \widehat Q_{I-III},$ hence the complementary maps $\widehat Q_{I-III},$ are also idempotent $3D-$ compatible maps. 

\item  The companion maps $R_{I-III},$ of mappings $Q_{I-III}$ are birational maps that due Proposition \ref{1st_prop} are Yang-Baxter maps. Explicitly they read, $R_{I-III}:(u,p,P;y,q,Q) \mapsto  (x,p,P;v,q,Q),$ where
    \begin{align*}
x=&\frac{u y}{u+Q-P},&v=&u+Q-P,& (R_I),\\
x=&u+y-\frac{pu+Q-P}{q},&v=&\frac{pu+Q-P}{q},& (R_{II}),\\
x=&u+y-\frac{p}{q}u,&v=&\frac{p}{q}u.& (R_{III}).
\end{align*}
Also the inverse maps $R_{I-III}^{-1}$ are Yang-Baxter maps and they read $R_{I-III}^{-1}:(x,p,P;v,q,Q)\mapsto (u,p,P;y,q,Q),$ where
    \begin{align*}
u=&v+P-Q,&y=&\frac{v x}{v+P-Q},& (R_I^{-1}),\\
u=&\frac{qv+P-Q}{p},&y=&v+x-\frac{qv+P-Q}{p},& (R_{II}^{-1}),\\
u=&\frac{q}{p}v,&y=&v+x-\frac{q}{p}v.& (R_{III}^{-1}).
\end{align*}
\item 
     The companion maps $\widehat R_{I-III}$ of the complementary $3D-$ compatible maps $\widehat Q_{I-III},$ are related to the companion maps $R_{I-III},$ of $Q_{I-III}$ by:
     \begin{align*}
       \widehat R_{I-III}= & \tau\circ R_{I-III}\circ \tau,
     \end{align*}
     where $\tau$ the transposition map
     \begin{align*}
       \tau: (x,{\bf p};y,{\bf q}) \mapsto &(y,{\bf q}; x,{\bf p}).
     \end{align*}
\item Mappings $Q_{I-III},$ $\widehat Q_{I-III},$ $R_{I-III}$ and $R_{I-III}^{-1},$ satisfy those entwining relations of Proposition \ref{1st_prop} that $Q^{-1}_{I-III}$ do not participate.
\end{enumerate}

\section{Discrete integrable systems on the triangular lattice}\label{sec3}

\subsection{Compatible maps as difference systems}
Let $\mathcal{X}$ be any set. There is a natural identification of a map $Q:(x,{\bf p};y,{\bf q})\mapsto (x_2,{\bf p};y_1,{\bf q})=(f(x,y,{\bf p},{\bf q}),{\bf p};g(x,y,{\bf p},{\bf q}),{\bf q}),$ where $x,y,x_2,y_1,p^i,q^i\in \mathcal{X}$ to a difference system defined on the edges of the $\mathbb{Z}^2$ graph (see Figure \ref{fig_not0}).
This identification is  made by
\begin{align*}
\begin{aligned}
x:=&x_{m+1/2,n},& y:=&y_{m,n+1/2}, & x_1:=&x_{m+3/2,n}, & etc.\\
 x_2:=&x_{m+1/2,n+1},  &y_1:=&y_{m+1,n+1/2}, & y_2:=&y_{m,n+3/2},& etc.\\
 {\bf p}:=&{\bf p}_{m+1/2},&{\bf q}:=&{\bf q}_{n+1/2},& {\bf p}_1:=&{\bf p}_{m+3/2},& etc.
\end{aligned} & &  m,n\in \mathbb{Z},
\end{align*}
where with subscripts we denoted discrete shifts on the associated $\mathbb{Z}^2$ graph. Moreover, $x,y$  are considered as the functions $x,y:\mathbb{Z}^2\rightarrow \mathcal{X},$ while $p^i,q^i:\mathbb{Z}\rightarrow \mathcal{X}$ and the mapping $Q$ is in one-to-one correspondence with the difference system of equations
\begin{align}\label{dif_s}
(x_2,y_1)=(f(x,y,{\bf p},{\bf q}),g(x,y,{\bf p},{\bf q})).
\end{align}

\tdplotsetmaincoords{60}{200}
\tdplotsetrotatedcoords{0}{20}{0} 

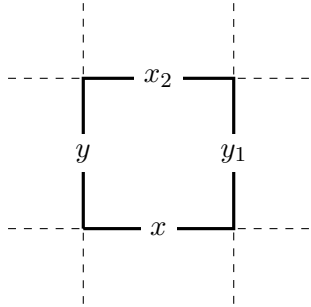
\begin{figure}[htb]
\begin{tikzpicture}[
		grid/.style={very thin,gray},
		axis/.style={->,blue,thick},
cube/.style={very thick},mynode/.style={circle,fill=green,minimum width=5pt,inner sep=3pt,outer sep=3pt}]
\draw[cube] (0,0)--(2,0)--(2,2)--(0,2)--(0,0);
\draw[dashed] (-1,0)--(0,0)--(0,-1);
\draw[dashed] (-1,2)--(0,2)--(0,3);
\draw[dashed] (2,3)--(2,2)--(3,2);
\draw[dashed] (2,-1)--(2,0)--(3,0);
\node[fill=white] at (1,0) {$x$};\node[fill=white] at (1,2) {$x_2$};
\node[fill=white] at (0,1) {$y$};\node[fill=white] at (2,1) {$y_1$};
\end{tikzpicture}
\caption{The difference system (\ref{dif_s}) assigned at the edges of the $\mathbb{Z}^2$ graph}\label{fig_not0}
\end{figure}





Provided that the underlying map $Q$ is $3D-$compatible, the associated difference systems can be extended as a compatible difference system on the   $\mathbb{Z}^N$ graph. This extension  on the $\mathbb{Z}^N$ graph is obtained by the identification
\begin{align} \label{notation1.1}
\begin{aligned}
x^{i}:=&x,& {\bf p}^i:=&{\bf p},&  x^{j}:=&y,& {\bf p}^j:=&{\bf q},& x^{i}_j:=&x_j,& x^{j}_i:=y_i,&
\end{aligned} & i\neq j \in \{1,\ldots, N\},
\end{align}
while the $3D$-compatibility guarantees the existence of such a system on the $\mathbb{Z}^N$ graph.  In the  notation (\ref{notation1.1}) the superscripts  represent the associated edges of the $\mathbb{Z}^N$ graph where the variables are assigned to. In addition, provided that (\ref{dif_s}) admits the symmetry
$f(y,x,{\bf q},{\bf p})=g(x,y,{\bf p},{\bf q}),$ the multidimensional extension of (\ref{dif_s}) can be represented in the following compact form
\begin{align*}
   x^{i}_j=&f(x^i,x^j, {\bf p}^{i}, {\bf p}^{j}), & i\neq j \in& \{1,\ldots, N\},
\end{align*}
where  $x^{i}_{jk}=x^{i}_{kj},$ $i\neq j\neq k\neq i\in \{1,\ldots, N\},$
holds.
\subsection{The difference systems associated with $Q_{I-III}, \widehat Q_{I-III}$ are multidimensional compatible}

Following the identification of the previous Section, in Table \ref{table1} we present the difference systems in edge variables associated with the maps $Q_{{I}-{III}}$ of Theorem \ref{theo1} and the difference systems associated with the complementary maps $\widehat Q_{{I}-{III}}.$
\begin{table}[h]
\begin{tabular}{c|ll|ll}
  \hline
  {Difference}\\ {system} & {}& {}& {}&{}\\ \hline
  $Q_I$ & $
          x_2-P=y_1-Q$,& $x_2y=y_1x,$&
          $T_2(x-P)=T_1(y-Q)$,& $\frac{T_2}{id}(x)=\frac{T_1}{id}(y)$ \\
  $Q_{II}$ & $ p x_2-P=q y_1-Q$,& $x_2+y=y_1+x $&$ T_2(px-P)=T_1(qy-Q),$& $(T_2-id)(x)=(T_1-id)(y) $\\
  $Q_{III}$ &  $   p x_2=q y_1$,& $x_2+y=y_1+x$& $  T_2( p x)=T_1( q y)$,& $(T_2-id)(x)=(T_1-id)(y)$ \\
 $\widehat Q_I$ &  $
          x-P=y-Q$,& $x_2y=y_1x$ & $
          id(x-P)=id(y-Q)$,& $\frac{T_2}{id}(x)=\frac{T_1}{id}(y)$ \\
  $\widehat Q_{II}$ & $ p x-P=q y-Q$,& $x_2+y=y_1+x $& $ id(p x-P)=id(q y-Q)$,& $(T_2-id)(x)=(T_1-id)(y)$\\
  $\widehat Q_{III}$ & $   p x=q y$,& $x_2+y=y_1+x$& $   id(p x)=id(q y)$,& $(T_2-id)(x)=(T_1-id)(y)$ \\
    \hline
\end{tabular}
\caption{The difference systems $Q_{{I}-{III}}, \widehat Q_{{I}-{III}}$ in conservation form, where~$T_i,$ $i=1,2$ the shift operators i.e. $T_i (f):=f_i$}\label{table1}
\end{table}
The  difference systems associated with $Q_{{I}-{III}}$ and  $\widehat Q_{{I}-{III}}$ should be treated separably since when they are imposed on the same elementary quad of the $\mathbb{Z}^2$  graph, leads to incompatibility.

\subsubsection{Multidimensional compatibility}
The $3D-$compatibility of $Q_{{I}-{III}},$ and  $\widehat Q_{{I}-{III}},$ results the multidimensional compatible extension of the associated difference systems in edge variables. In detail, the multidimensional extension of $Q_{{I}-{III}},$ and the associated multidimensional compatibility formulas respectively read
\begin{align*}
(Q_I)&& x^i_j=&(P^i-P^j)\frac{x^i}{x^i-x^j},&&x^i_{jk}=\frac{(P^i-P^j)(P^i-P^k)(x^j-x^k)x^i}{(P^i-P^j)x^ix^j-(P^i-P^k)x^ix^k+(P^j-P^k)x^jx^k}, \\
(Q_{II})&&x^i_j=&\frac{p^j(x^j-x^i)+P^j-P^i}{p^j-p^i},&&\begin{multlined} x^i_{jk}=\frac{(P^i-P^j)p^k}{(p^i-p^j)(p^j-p^k)}-\frac{(P^i-P^k)p^j}{(p^i-p^k)(p^j-p^k)}\\ -\frac{p^jp^k\left((p^j-p^k)x^i+(p^k-p^i)x^j+(p^i-p^j)x^k\right)}{(p^i-p^j)(p^j-p^k)(p^k-p^i)},\end{multlined}\\
(Q_{III})&&x^i_j=&\frac{p^j(x^j-x^i)}{p^j-p^i},&& x^i_{jk}=-\frac{p^jp^k\left((p^j-p^k)x^i+(p^k-p^i)x^j+(p^i-p^j)x^k\right)}{(p^i-p^j)(p^j-p^k)(p^k-p^i)},
\end{align*}
where it is clear that the expressions $x^{i}_{jk}$ are invariant under the interchange $j\leftrightarrow k$ so   $x^{i}_{jk}=x^{i}_{kj},$ $i\neq j\neq k\neq i\in \{1,\ldots, N\},$ holds.
While for the multidimensional extension of  $\widehat Q_{{I}-{III}},$ we have
\begin{align*}
(\widehat Q_I)&& x^i_{-j}=&(P^i-P^j)\frac{x^i}{x^i-x^j},&&x^i_{-j-k}=\frac{(P^i-P^j)(P^i-P^k)(x^j-x^k)x^i}{(P^i-P^j)x^ix^j-(P^i-P^k)x^ix^k+(P^j-P^k)x^jx^k}, \\
(\widehat Q_{II})&&x^i_{-j}=&\frac{p^j(x^j-x^i)+P^j-P^i}{p^j-p^i},&&\begin{multlined} x^i_{-j-k}=\frac{(P^i-P^j)p^k}{(p^i-p^j)(p^j-p^k)}-\frac{(P^i-P^k)p^j}{(p^i-p^k)(p^j-p^k)}\\ -\frac{p^jp^k\left((p^j-p^k)x^i+(p^k-p^i)x^j+(p^i-p^j)x^k\right)}{(p^i-p^j)(p^j-p^k)(p^k-p^i)},\end{multlined}\\
(\widehat Q_{III})&&x^i_{-j}=&\frac{p^j(x^j-x^i)}{p^j-p^i},&& x^i_{-j-k}=-\frac{p^jp^k\left((p^j-p^k)x^i+(p^k-p^i)x^j+(p^i-p^j)x^k\right)}{(p^i-p^j)(p^j-p^k)(p^k-p^i)},
\end{align*}
and $x^{i}_{-j-k}=x^{i}_{kj},$ $i\neq j\neq k\neq i\in \{1,\ldots, N\},$ holds.

\subsection{Idempotent compatible maps and discrete integrable systems on the triangular lattice }
In the previous Section, it was discussed the correspondence of maps with difference systems in edge variables. Here, we exploit this correspondence while focusing on the idempotent $3D-$compatible  maps of Section \ref{sec2}. In detail, we show that to the difference systems in edge variables corresponding to $Q_{I-III}$ and to the complementary systems $\widehat Q_{I-III}$, are associated difference systems in vertex variables defined on a triangular stencil. In that respect, the difference systems in vertex variables associated with $Q_{I-III}$ are defined on the ``black" triangles  (see Figure \ref{Figure2} (b)) of the triangular lattice $Q(A_2),$ while
 the difference systems  associated with $\hat Q_{I-III}$ are defined on the ``white" triangles of the  $Q(A_2)$ lattice. Note that the $Q(A_2)$ it is defined by
 $
 Q(A_2)=\{(l,m,n)\in\mathbb{Z}^3:l+m+n=0\}.
 $
 We consider the functions $\phi: Q(A_2)\rightarrow \mathbb{CP}^1,$ $\phi: (l,m,n)\mapsto \phi_{l,m,n},$ where $\phi_{l,m,n}:=T_{\hat 1}^lT_{\hat 2}^mT_{\hat 3}^n (\phi_{0,0,0})$ and $T_{\hat 1},T_{\hat 2}$ and $T_{\hat 3}$ the elementary shift operators. Note that there is $T_{\hat 1}T_{\hat 2}T_{\hat 3}=id,$ so $\phi_{l+a,m+a,n+a}=\phi_{l,m,n},$ $a\in \mathbb{Z}.$ In order to eliminate this redundancy, we choose the independent shifts $T_1:=T_{\hat 1},T_2:=T_{\hat 3}^{-1}$ so $\phi_{l',m'}:=\phi_{l,0,-n}.$ Furthermore, it is convenient to adopt the notation
 \begin{align}\label{not2}
 \phi_{l'-1,m'}:=&\phi_{-1},&\phi_{l',m'}:=&\phi,&\phi_{l'+1,m'}:=&\phi_1,&\phi_{l',m'+1}:=&\phi_2,&\phi_{l'+2,m'+1}:=&\phi_{112},& etc.
 \end{align}

\tdplotsetmaincoords{60}{200}
\tdplotsetrotatedcoords{0}{20}{0} 

\begin{figure}[htb]
\begin{minipage}{.4\textwidth}
\adjustbox{scale=0.5,center}{\begin{tikzpicture}[
		tdplot_main_coords,
		grid/.style={very thin,gray},
		axis/.style={->,blue,thick},
cube/.style={very thick},
		mynode/.style={circle,fill=green,minimum width=3pt,inner sep=5pt,outer sep=5pt},
  extended line/.style={shorten >=-#1,shorten <=-#1},
  extended line/.default=1cm]
\draw[fill] (0,0,0) circle (3pt);
\node[fill=white] at (3,0,0) {$\phi$}; \draw[fill] (3,0,0) circle (3pt);
\node[fill=white] at (3,3,0) {$\phi$}; \draw[fill] (3,3,0) circle (3pt);
\node[fill=white] at (0,3,0) {$\phi$}; \draw[fill] (0,3,0) circle (3pt);
\node[fill=white] at (6,0,0) {$\phi$}; \draw[fill] (6,0,0) circle (3pt);
\node[fill=white] at (6,3,0) {$\phi$}; \draw[fill] (6,3,0) circle (3pt);
\node[fill=white] at (0,6,0) {$\phi$}; \draw[fill] (0,6,0) circle (3pt);
\node[fill=white] at (3,6,0) {$\phi$}; \draw[fill] (3,6,0) circle (3pt);
\node[fill=white] at (6,6,0) {$\phi$}; \draw[fill] (6,6,0) circle (3pt);
\draw[dashed,extended line=0.5cm] (0,9,0)--(0,-3,0)node[right]{$n=1$};
\draw[dashed] (3,9,0)--(3,-3,0)node[right]{$n=0$};
\draw[dashed] (6,9,0)--(6,-3,0)node[right]{$n=-1$};
\draw[dashed] (9,6,0)--(-3,6,0)node[right]{$l=-1$};
\draw[dashed] (9,3,0)--(-3,3,0)node[right]{$l=0$};
\draw[dashed,extended line=-0.5cm] (9,0,0)--(-3,0,0)node[right]{$l=1$};
\draw[dashed] (-3,9,0)--(9,-3,0)node[right]{$m=0$};
\draw[dashed,extended line=0.5cm,fill=white] (0,9,0)--(9,0,0)node[left]{$m=-1$};
\draw[dashed,extended line=1.5cm,fill=white] (0,3,0)--(3,0,0);
\end{tikzpicture}}
\captionsetup{font=footnotesize}
\captionof*{figure}{(a) The $Q(A_2)$ lattice}
\end{minipage}
\begin{minipage}{.4\textwidth}
\adjustbox{scale=0.5,center}{\begin{tikzpicture}[
		tdplot_main_coords,
		grid/.style={very thin,gray},
		axis/.style={->,blue,thick},
cube/.style={thick,dashed,opacity=0.2,draw=white},
		mynode/.style={circle,fill=green,minimum width=5pt,inner sep=3pt,outer sep=3pt}]
\draw[cube] (0,0,0)--(3,0,0)--(3,3,0)--(0,3,0)--cycle;
\draw[cube] (3,0,0)--(6,0,0)--(6,3,0)--(3,3,0);
\draw[cube] (3,3,0)--(3,6,0)--(6,6,0)--(6,3,0);
\draw[cube] (3,6,0)--(0,6,0)--(0,3,0);
\draw[cube] (6,3,0)--(3,6,0) (6,0,0)--(3,3,0)--(0,6,0) (3,0,0)--(0,3,0);
\node[fill=white] at (0,0,0) {$\phi_{12}$};
\node[fill=white] at (3,0,0) {$\phi_{2}$};
\node[fill=white] at (3,3,0) {$\phi$};
\node[fill=white] at (0,3,0) {$\phi_{1}$};
\node[fill=white] at (6,-3,0) {};
\node[fill=white] at (6,0,0) {$\phi_{-12}$};
\node[fill=white] at (6,3,0) {$\phi_{-1}$};
\node[fill=white] at (0,6,0) {$\phi_{1-2}$};
\node[fill=white] at (3,6,0) {$\phi_{-2}$};
\node[fill=white] at (6,6,0) {$\phi_{-1-2}$};

\draw[fill=gray,fill opacity=0.2,draw=white] (3,6,0)--(6,3,0)--(3,3,0)--cycle;
\draw[fill=gray,fill opacity=0.2,draw=white] (3,3,0)--(6,0,0)--(3,0,0)--cycle;
\draw[fill=gray,fill opacity=0.2,draw=white] (0,6,0)--(3,3,0)--(0,3,0)--cycle;
\draw[fill=gray,fill opacity=0.2,draw=white] (0,3,0)--(3,0,0)--(0,0,0)--cycle;
\draw[fill=gray,fill opacity=0.2,draw=white] (6,-3,0)--(3,0,0)--(3,-3,0)--cycle;
\end{tikzpicture}}
\captionsetup{font=footnotesize}
\captionof*{figure}{(b) The function $\phi$ assigned at the $Q(A_2)$ lattice }\label{fig22}
\end{minipage}
\caption{{}}\label{Figure2}
\end{figure}

Consider the difference system $Q_I$ that in conservation form (see Table \ref{table1}) reads
\begin{align} \label{q1_e1}
T_2(x-P)=&T_1(y-Q),\\ \label{q1_e2}
\frac{T_2 (x)}{x}=&\frac{T_1 (y)}{y}.
\end{align}
The conservation relation (\ref{q1_e1}) guarantees the existence of a function $\phi$ such that
\begin{align}\label{potq11}
x-P=&T_1(\phi),&y-Q=&T_2(\phi).
\end{align} In terms of $\phi,$ (\ref{q1_e2}) reads
\begin{align}\label{burger}
(\phi_{12}+P)(\phi_2+Q)=&(\phi_{12}+Q)(\phi_1+P).
\end{align}
On the other hand, the conservation relation (\ref{q1_e2}) guarantees the existence of a potential function $\chi$ such that
\begin{align}\label{potq12}
x=&\chi/T_1(\chi),&y=&\chi/T_2(\chi).
\end{align}
 In terms of the potential  $\chi$ (\ref{q1_e1}) reads
\begin{align}\label{burger_l}
\chi_2-\chi_1=&(P-Q)\chi_{12}.
\end{align}
By eliminating $x,y$ from (\ref{potq11}) and (\ref{potq12}) we obtain that (\ref{burger}) and (\ref{burger_l}) are related by the substitution
\begin{align}\label{sub1}
\frac{\chi}{\chi_1}-P=&\phi_1,& \frac{\chi}{\chi_2}-Q=&\phi_2,
\end{align}
hence (\ref{burger}) is a linearizable equation. Equation (\ref{burger}) serves as a discretization of the Burgers equation and it was first derived in \cite{Levi:1983}, c.f. \cite{Zhang:2021,Zhang:2022}. Both the discrete Burgers equation  (\ref{burger}) and its associated linear equation  (\ref{burger_l}) are defined on the black triangles of the $Q(A2)$ lattice, see Figure \ref{Figure2} $(b)$.  Working similarly, from the complementary  difference system $\widehat Q_I$ we obtain the following two difference systems defined on the vertices of the white triangles  of a $Q(A2)$ lattice
\begin{align*}
(\phi+P)(\phi_1+Q)=&(\phi+Q)(\phi_2+P), & \chi_1-\chi_2=&(P-Q)\chi,
\end{align*}
which are related by
\begin{align*}
\frac{\chi_{1}}{\chi}-P=&\phi,& \frac{\chi_{2}}{\chi}-Q=&\phi.
\end{align*}


In the following Table we present the difference systems in vertex variables  defined on the black  triangles of the $Q(A2)$ lattice, which correspond to the $Q_{I-III}$  edge systems.

\begin{table}[h]
  \centering
  \begin{tabular}{|c|c|c|}
    \hline
    Difference system & Potentials & Triangular equations \\ \hline
    $Q_I$ & $\begin{array}{cc}
              x=\phi_1+P & y=\phi_2+Q\\
              x=\frac{\chi}{\chi_1}, & y=\frac{\chi}{\chi_2}
              \end{array} $ & $\begin{array}{c}(\phi_{12}+P)(\phi_2+Q)=(\phi_{12}+Q)(\phi_1+P)\\
                                     \chi_2-\chi_1=(P-Q)\chi_{12}\end{array}$ \\ [3mm]
    $Q_{II}$ & $\begin{array}{cc}
                 x=\frac{\phi_1+P}{p}, & y=\frac{\phi_2+Q}{q}\\
                 x=\chi_1-\chi, & y=\chi_2-\chi
                  \end{array}$ & $\begin{array}{c}
                       \left(\frac{1}{p}-\frac{1}{q}\right)\phi_{12}=\frac{1}{p}\phi_1-\frac{1}{q}\phi_2,\\
                       (p-q)\chi_{12}=q\chi_1-p\chi_2+P-Q
                       \end{array}$ \\ [3mm]
    $Q_{III}$ & $\begin{array}{cc}
                 x=\frac{\phi_1}{p}, & y=\frac{\phi_2}{q}\\
                 x=\chi_1-\chi, & y=\chi_2-\chi
                  \end{array} $ & $\begin{array}{c}
                  \left(\frac{1}{p}-\frac{1}{q}\right)\phi_{12}=\frac{1}{p}\phi_1-\frac{1}{q}\phi_2\\
                  (p-q)\chi_{12}=q\chi_1-p\chi_2
                  \end{array}$ \\
    \hline
  \end{tabular}
  \caption{Potentials and associated difference systems in edge variables corresponding to $Q_{I-III}$}\label{idolons}
\end{table}


\subsubsection{An initial value problem on the $Q(A2)$ lattice}
A well posed initial value problem for the difference systems in vertex variables associated with the idempotent maps $Q_{I-III}$ is presented in Figure \ref{figfin}.

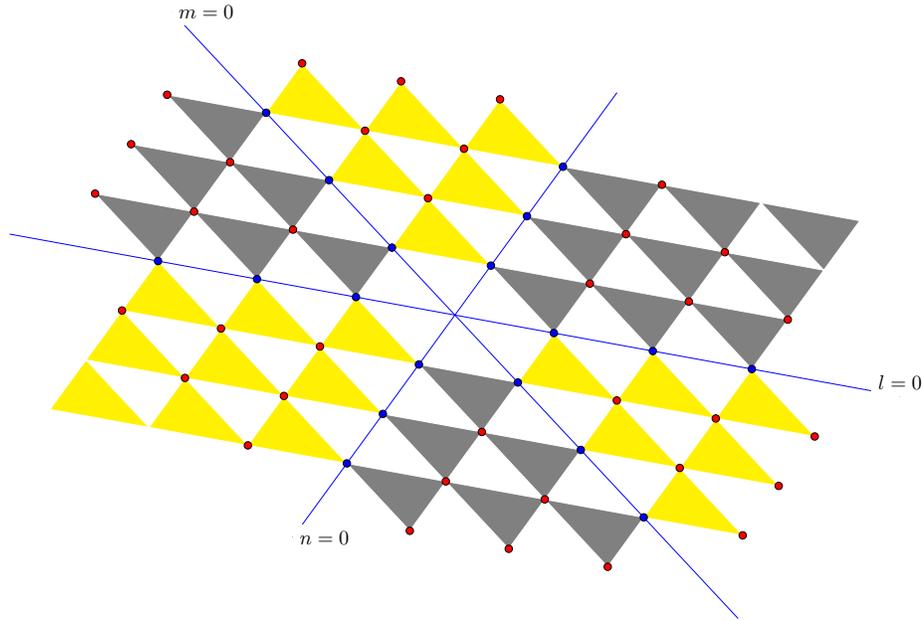
\begin{figure}[htb]\adjustbox{scale=0.7,center}{
\begin{tikzpicture}[tdplot_main_coords,		grid/.style={very thin,gray},
		axis/.style={->,blue,thick},
cube/.style={very thin,dashed},mynode/.style={circle,fill=green,minimum width=5pt,inner sep=3pt,outer sep=3pt}]

\draw[solid,blue] (0,-9)--(0,9) (-9,0)--(9,0);\draw[blue] (-9,9)--(9,-9);  \node[fill=white, above] at (-9,0) {$l=0$};  \node[fill=white, right] at (0,9) {$n=0$}; \node[fill=white, right] at (9,-9) {$m=0$};
\draw[fill=yellow,fill opacity=0.2,draw=white] (0,2)--(2,2)--(2,0)--cycle (0,4)--(2,4)--(2,2)--cycle (0,6)--(2,6)--(2,4)--cycle    (2,2)--(4,2)--(4,0)--cycle (2,4)--(4,4)--(4,2)--cycle (2,6)--(4,6)--(4,4)--cycle             (4,2)--(6,2)--(6,0)--cycle (4,4)--(6,4)--(6,2)--cycle (4,6)--(6,6)--(6,4)--cycle;
\draw[fill=gray,fill opacity=0.2,draw=white] (0,-2)--(-2,-2)--(-2,0)--cycle (0,-4)--(-2,-4)--(-2,-2)--cycle (0,-6)--(-2,-6)--(-2,-4)--cycle    (-2,-2)--(-4,-2)--(-4,0)--cycle (-2,-4)--(-4,-4)--(-4,-2)--cycle (-2,-6)--(-4,-6)--(-4,-4)--cycle             (-4,-2)--(-6,-2)--(-6,0)--cycle (-4,-4)--(-6,-4)--(-6,-2)--cycle (-4,-6)--(-6,-6)--(-6,-4)--cycle;
\draw[fill=gray,fill opacity=0.2,draw=white] (0,2)--(-2,4)--(-2,2)--cycle    (0,4)--(-2,6)--(-2,4)--cycle (0,6)--(-2,8)--(-2,6)--cycle          (-2,4)--(-4,6)--(-4,4)--cycle (-2,6)--(-4,8)--(-4,6)--cycle (-4,6)--(-6,8)--(-6,6)--cycle;
\draw[fill=yellow,fill opacity=0.2,draw=white] (-2,0)--(-2,2)--(-4,2)--cycle  (-4,0)--(-4,2)--(-6,2)--cycle (-6,0)--(-6,2)--(-8,2)--cycle        (-4,2)--(-4,4)--(-6,4)--cycle (-6,2)--(-6,4)--(-8,4)--cycle          (-6,4)--(-6,6)--(-8,6)--cycle;
\draw[fill=gray,fill opacity=0.2,draw=white] (2,0)--(2,-2)--(4,-2)--cycle  (4,0)--(4,-2)--(6,-2)--cycle  (6,0)--(6,-2)--(8,-2)--cycle    (4,-2)--(4,-4)--(6,-4) (6,-4)--(6,-6)--(8,-6)--cycle       (6,-2)--(6,-4)--(8,-4)--cycle;

\draw[fill=yellow,fill opacity=0.2,draw=white] (0,-2)--(2,-2)--(2,-4)--cycle    (0,-4)--(2,-4)--(2,-6)--cycle     (0,-6)--(2,-6)--(2,-8)--cycle     (2,-4)--(4,-4)--(4,-6)--cycle     (2,-6)--(4,-6)--(4,-8)--cycle    (4,-6)--(6,-6)--(6,-8)--cycle;    

\draw[fill=blue,opacity=0.6] (2,0) circle (2pt); \draw[fill=blue,opacity=0.6] (4,0) circle (2pt); \draw[fill=blue,opacity=0.6] (6,0) circle (2pt); \draw[fill=blue,opacity=0.6] (-2,0) circle (2pt); \draw[fill=blue,opacity=0.6] (-4,0) circle (2pt); \draw[fill=blue,opacity=0.6] (-6,0) circle (2pt); 
\draw[fill=blue,opacity=0.6] (0,2) circle (2pt); \draw[fill=blue,opacity=0.6] (0,4) circle (2pt); \draw[fill=blue,opacity=0.6] (0,6) circle (2pt); 
\draw[fill=blue,opacity=0.6] (0,-2) circle (2pt); \draw[fill=blue,opacity=0.6] (0,-4) circle (2pt); \draw[fill=blue,opacity=0.6] (0,-6) circle (2pt); 
\draw[fill=blue,opacity=0.6] (2,-2) circle (2pt); \draw[fill=blue,opacity=0.6] (4,-4) circle (2pt); \draw[fill=blue,opacity=0.6] (6,-6) circle (2pt);
\draw[fill=blue,opacity=0.6] (-2,2) circle (2pt); \draw[fill=blue,opacity=0.6] (-4,4) circle (2pt); \draw[fill=blue,opacity=0.6] (-6,6) circle (2pt);

\draw[fill=red,opacity=0.6] (4,-2) circle (2pt);\draw[fill=red,opacity=0.6] (6,-2) circle (2pt); \draw[fill=red,opacity=0.6] (8,-2) circle (2pt); \draw[fill=red,opacity=0.6] (6,-4) circle (2pt); \draw[fill=red,opacity=0.6] (8,-4) circle (2pt);\draw[fill=red,opacity=0.6] (8,-6) circle (2pt); \draw[fill=red,opacity=0.6] (4,2) circle (2pt);  \draw[fill=red,opacity=0.6] (2,6) circle (2pt); \draw[fill=red,opacity=0.6] (2,2) circle (2pt); \draw[fill=red,opacity=0.6] (2,4) circle (2pt); \draw[fill=red,opacity=0.6] (4,4) circle (2pt);\draw[fill=red,opacity=0.6] (6,2) circle (2pt);

\draw[fill=red,opacity=0.6] (-2,4) circle (2pt);\draw[fill=red,opacity=0.6] (-2,6) circle (2pt);\draw[fill=red,opacity=0.6] (-2,8) circle (2pt);\draw[fill=red,opacity=0.6] (-4,6) circle (2pt);\draw[fill=red,opacity=0.6] (-4,8) circle (2pt); \draw[fill=red,opacity=0.6] (-6,8) circle (2pt);

\draw[fill=red,opacity=0.6] (-4,2) circle (2pt); \draw[fill=red,opacity=0.6] (-6,4) circle (2pt); \draw[fill=red,opacity=0.6] (-6,2) circle (2pt);
\draw[fill=red,opacity=0.6] (-8,2) circle (2pt); \draw[fill=red,opacity=0.6] (-8,4) circle (2pt); \draw[fill=red,opacity=0.6] (-8,6) circle (2pt);
\draw[fill=red,opacity=0.6] (-2,-2) circle (2pt); \draw[fill=red,opacity=0.6] (-2,-4) circle (2pt); \draw[fill=red,opacity=0.6] (-2,-6) circle (2pt);
\draw[fill=red,opacity=0.6] (-4,-4) circle (2pt);\draw[fill=red,opacity=0.6] (-4,-2) circle (2pt); \draw[fill=red,opacity=0.6] (-6,-2) circle (2pt);

\draw[fill=red,opacity=0.6] (2,-4) circle (2pt);\draw[fill=red,opacity=0.6] (2,-6) circle (2pt); \draw[fill=red,opacity=0.6] (2,-8) circle (2pt);
\draw[fill=red,opacity=0.6] (4,-6) circle (2pt); \draw[fill=red,opacity=0.6] (4,-8) circle (2pt);\draw[fill=red,opacity=0.6] (6,-8) circle (2pt);
\end{tikzpicture}}
\caption{A well possed initial value problem (blue circles that lie on the initial lines $l=0,$ $m=0,$ and $n=0$) on the $Q(A2)$ lattice. Evaluation of the solution  (red circles) on the six chambers defined by the  initial lines $l=0,$ $m=0,$ and $n=0.$ The solution in neighbouring chambers is defined on triangles of opposite colour-the black triangles are represented by grey colour, while the white ones by yellow-} \label{figfin}
\end{figure} 


\section*{Acknowledgements}
\parbox{.135\textwidth}{\begin{tikzpicture}[scale=.03]
\fill[fill={rgb,255:red,0;green,51;blue,153}] (-27,-18) rectangle (27,18);
\pgfmathsetmacro\inr{tan(36)/cos(18)}
\foreach \i in {0,1,...,11} {
\begin{scope}[shift={(30*\i:12)}]
\fill[fill={rgb,255:red,255;green,204;blue,0}] (90:2)
\foreach \x in {0,1,...,4} { -- (90+72*\x:2) -- (126+72*\x:\inr) };
\end{scope}}
\end{tikzpicture}} \parbox{.85\textwidth}
{This research is part of the project No. 2022/45/P/ST1/03998  co-funded by the National Science Centre and the European Union Framework Programme
 for Research and Innovation Horizon 2020 under the Marie Sklodowska-Curie grant agreement No. 945339. For the purpose of Open Access, the author has applied a CC-BY public copyright licence to any Author Accepted Manuscript (AAM) version arising from this submission.}

\appendix

\section{Proof of Proposition \ref{1st_prop}}\label{appa}

Mapping $Q$ is a $3D-$compatible map iff equations (\ref{3d:comp:def1}) are satisfied. In detail, applying the maps $Q_{ij}$ on the initial data $(x,y,z),$ we obtain
\begin{align*}
  (x,y,z)\xmapsto{Q_{12}}&(x_2,y_1,z),& (x,y,z)\xmapsto{Q_{13}}&(x_3,y,z_1),& (x,y,z)\xmapsto{Q_{23}}&(x,y_3,z_2),
\end{align*}
that is the six values $(x_2,x_3,y_1,y_3,z_1,z_2).$ Applying one more time the maps $Q_{ij}$ on these six values we obtain $(x_{23},x_{32},y_{13},y_{31},z_{12},z_{21}).$ The fact that $Q$ is a $3D-$compatible map assures $x_{23}=x_{32},$ $y_{13}=y_{31}$ and $z_{12}=z_{21}$, see Figure \ref{3dcmp}. The equivalence of items $(1)$ and $(3)$ is proven in \cite{ABS:YB}. 

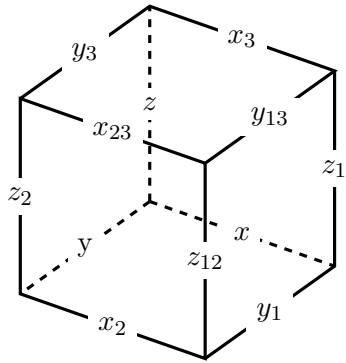
\begin{figure}[h]
  \centering

\tdplotsetmaincoords{60}{125}
\begin{tikzpicture}[
		tdplot_main_coords,
		grid/.style={very thin,gray},
		axis/.style={->,blue,thick},
cube/.style={very thick},
		cube hidden/.style={very thick,dashed}]
	
	\draw[cube] (3,0,0) -- (3,3,0)node[midway,fill=white]{$x_2$} -- (3,3,3)node[midway,fill=white]{$z_{12}$} -- (3,0,3)node[midway,fill=white]{$x_{23}$} -- cycle node[midway,fill=white]{$z_2$};


\draw[cube]  (3,0,3) -- (0,0,3)node[midway,fill=white]{$y_3$}--(0,3,3)node[midway,fill=white]{$x_{3}$}-- (3,3,3)node[midway,fill=white]{$y_{13}$};

\draw[cube] (0,3,3)--(0,3,0)node[midway,fill=white]{$z_{1}$}--(3,3,0)node[midway,fill=white]{$y_{1}$};



	\draw[cube hidden] (0,0,0) -- (3,0,0)node[midway,fill=white]{y};
	\draw[cube hidden] (0,0,0) -- (0,3,0)node[midway,fill=white]{$x$};
	\draw[cube hidden] (0,0,0) -- (0,0,3)node[midway,fill=white]{$z$};
	
\end{tikzpicture}
  \caption{Representation on the cube of the   $3D-$compatible map $Q$}\label{3dcmp}
\end{figure}

In the following Figures we prove  that provided that $Q$ is a $3D-$compatible map, the map $R^{-1}:=(Q^c)^{-1},$ that is the inverse of the companion map $Q^c$ is a Yang-Baxter map (see item $(4)$).

\tdplotsetmaincoords{60}{125}
\tdplotsetrotatedcoords{0}{20}{0} 

\begin{figure}[htb]\adjustbox{scale=0.7}{
\begin{minipage}{.24\textwidth}
\begin{tikzpicture}[
		tdplot_main_coords,
		grid/.style={very thin,gray},
		axis/.style={->,blue,thick},
cube/.style={very thick},
		mynode/.style={circle,fill=green,minimum width=5pt,inner sep=3pt,outer sep=3pt}]
	\draw[dashed] (0,0,0) -- (0,3,0)node[midway,fill=white]{$x$};
\draw[cube](0,3,0)--(3,3,0)node[midway,fill=white]{$y$};
\draw[cube](3,3,0) -- (3,3,3)node[midway,fill=white]{$z$};	
\end{tikzpicture}
\end{minipage}\hspace{-1.0cm}
$\xmapsto{R^{-1}_{12}}$\;
\begin{minipage}{.24\textwidth}
\begin{tikzpicture}[
		tdplot_main_coords,
		grid/.style={very thin,gray},
		axis/.style={->,blue,thick},
cube/.style={very thick},
		mynode/.style={circle,fill=green,minimum width=5pt,inner sep=3pt,outer sep=3pt}]
	\draw[dashed] (0,0,0) -- (0,3,0)node[midway,fill=white]{$x$};
\draw[cube](0,3,0)--(3,3,0)node[midway,fill=white]{$y$};
\draw[cube](3,3,0) -- (3,3,3)node[midway,fill=white]{$z$};	
\draw[dashed] (0,0,0)--(3,0,0)node[midway,fill=white]{$y_{-1}$};\draw[cube] (3,0,0)--(3,3,0)node[midway,fill=white]{$x_{2}$};
\end{tikzpicture}
\end{minipage} \hspace{0.3cm} $\xmapsto{R^{-1}_{13}}$\;
\begin{minipage}{.24\textwidth}
\begin{tikzpicture}[
		tdplot_main_coords,
		grid/.style={very thin,gray},
		axis/.style={->,blue,thick},
cube/.style={very thick},
		mynode/.style={circle,fill=green,minimum width=5pt,inner sep=3pt,outer sep=3pt}]
	\draw[dashed] (0,0,0) -- (0,3,0)node[midway,fill=white]{$x$};
\draw[cube](0,3,0)--(3,3,0)node[midway,fill=white]{$y$};
\draw[cube](3,3,0) -- (3,3,3)node[midway,fill=white]{$z$};	
\draw[dashed] (0,0,0)--(3,0,0)node[midway,fill=white]{$y_{-1}$};\draw[cube] (3,0,0)--(3,3,0)node[midway,fill=white]{$x_{2}$};
\draw[cube] (3,3,3)-- (3,0,3)node[midway,fill=white]{$x_{23}$}--(3,0,0)node[midway,fill=white]{$z_{-1}$};
\end{tikzpicture}
\end{minipage}
\hspace{0.6cm}  $\xmapsto{R^{-1}_{23}}$\;
\begin{minipage}{.24\textwidth}
\begin{tikzpicture}[
		tdplot_main_coords,
		grid/.style={very thin,gray},
		axis/.style={->,blue,thick},
cube/.style={very thick},
		mynode/.style={circle,fill=green,minimum width=5pt,inner sep=3pt,outer sep=3pt}]
	\draw[dashed] (0,0,0) -- (0,3,0)node[midway,fill=white]{$x$};
\draw[cube](0,3,0)--(3,3,0)node[midway,fill=white]{$y$};
\draw[cube](3,3,0) -- (3,3,3)node[midway,fill=white]{$z$};	
\draw[dashed] (0,0,0)--(3,0,0)node[midway,fill=white]{$y_{-1}$};\draw[cube] (3,0,0)--(3,3,0)node[midway,fill=white]{$x_{2}$};
\draw[cube] (3,3,3)-- (3,0,3)node[midway,fill=white]{$x_{23}$}--(3,0,0)node[midway,fill=white]{$z_{-1}$};
\draw[cube]  (3,0,3) -- (0,0,3)node[midway,fill=white]{$y_{-13}$};
\draw[dashed] (0,0,0) -- (0,0,3)node[midway,fill=white]{$z_{-1-2}$};
\end{tikzpicture}
\end{minipage} }
\caption{The chain of maps $R^{-1}_{23}R^{-1}_{13}R^{-1}_{12}$ applied on the initial data $(x,y,z)$.}\label{fig2}
\end{figure}

\begin{figure}[htb]\adjustbox{scale=0.7}{
\begin{minipage}{.24\textwidth}
\begin{tikzpicture}[
		tdplot_main_coords,
		grid/.style={very thin,gray},
		axis/.style={->,blue,thick},
cube/.style={very thick},
		mynode/.style={circle,fill=green,minimum width=5pt,inner sep=3pt,outer sep=3pt}]
	\draw[dashed] (0,0,0) -- (0,3,0)node[midway,fill=white]{$x$};
\draw[cube](0,3,0)--(3,3,0)node[midway,fill=white]{$y$};
\draw[cube](3,3,0) -- (3,3,3)node[midway,fill=white]{$z$};	
\end{tikzpicture}
\end{minipage}\hspace{-1.0cm}
$\xmapsto{R^{-1}_{23}}$\hspace{0.2cm}
\begin{minipage}{.24\textwidth}
\begin{tikzpicture}[
		tdplot_main_coords,
		grid/.style={very thin,gray},
		axis/.style={->,blue,thick},
cube/.style={very thick},
		mynode/.style={circle,fill=green,minimum width=5pt,inner sep=3pt,outer sep=3pt}]
	\draw[dashed] (0,0,0) -- (0,3,0)node[midway,fill=white]{$x$};
\draw[cube](0,3,0)--(3,3,0)node[midway,fill=white]{$y$};
\draw[cube](3,3,0) -- (3,3,3)node[midway,fill=white]{$z$};	
\draw[cube](0,3,0)--(0,3,3)node[midway,fill=white]{$z_{-2}$};
\draw[cube](0,3,3)--(3,3,3)node[midway,fill=white]{$y_{3}$};
\end{tikzpicture}
\end{minipage} \hspace{-0.3cm}
 $\xmapsto{R^{-1}_{13}}$\;
\begin{minipage}{.24\textwidth}
\begin{tikzpicture}[
		tdplot_main_coords,
		grid/.style={very thin,gray},
		axis/.style={->,blue,thick},
cube/.style={very thick},
		mynode/.style={circle,fill=green,minimum width=5pt,inner sep=3pt,outer sep=3pt}]
\draw[dashed] (0,0,0) -- (0,3,0)node[midway,fill=white]{$x$};
\draw[cube](0,3,0)--(3,3,0)node[midway,fill=white]{$y$};
\draw[cube](3,3,0) -- (3,3,3)node[midway,fill=white]{$z$};	
\draw[cube](0,3,0)--(0,3,3)node[midway,fill=white]{$z_{-2}$};
\draw[cube](0,3,3)--(3,3,3)node[midway,fill=white]{$y_{3}$};
\draw[cube] (0,0,3) -- (0,3,3)node[midway,fill=white]{$x_3$};
\draw[dashed] (0,0,0) -- (0,0,3)node[midway,fill=white]{$z_{-1-2}$};
\end{tikzpicture}
\end{minipage}
\hspace{0.6cm}  $\xmapsto{R^{-1}_{12}}$\;
\begin{minipage}{.24\textwidth}
\begin{tikzpicture}[
		tdplot_main_coords,
		grid/.style={very thin,gray},
		axis/.style={->,blue,thick},
cube/.style={very thick},
		mynode/.style={circle,fill=green,minimum width=5pt,inner sep=3pt,outer sep=3pt}]
\draw[dashed] (0,0,0) -- (0,3,0)node[midway,fill=white]{$x$};
\draw[cube](0,3,0)--(3,3,0)node[midway,fill=white]{$y$};
\draw[cube](3,3,0) -- (3,3,3)node[midway,fill=white]{$z$};	
\draw[cube](0,3,0)--(0,3,3)node[midway,fill=white]{$z_{-2}$};
\draw[cube](0,3,3)--(3,3,3)node[midway,fill=white]{$y_{3}$};
\draw[cube] (0,0,3) -- (0,3,3)node[midway,fill=white]{$x_3$};
\draw[dashed] (0,0,0) -- (0,0,3)node[midway,fill=white]{$z_{-1-2}$};
\draw[cube] (3,3,3)-- (3,0,3)node[midway,fill=white]{$x_{23}$};
\draw[cube]  (3,0,3) -- (0,0,3)node[midway,fill=white]{$y_{-13}$};
\end{tikzpicture}
\end{minipage} }
\caption{The chain of maps $R^{-1}_{12}R^{-1}_{13}R^{-1}_{23}$ applied on the initial data $(x,y,z)$.}\label{fig3}
\end{figure}
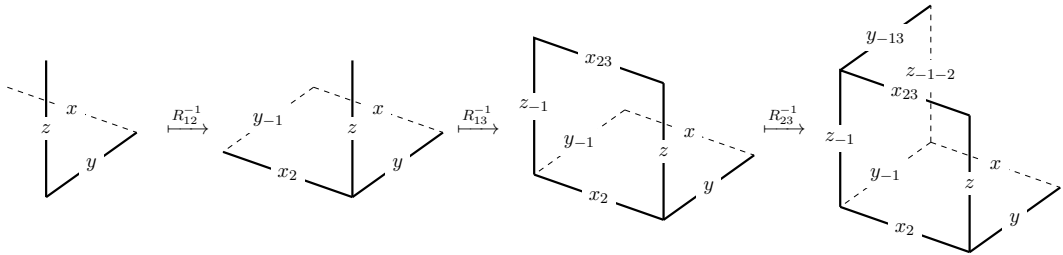
The assumption that $Q$ is $3D-$compatible assures that two ways (see Figure \ref{fig2} and \ref{fig3}) of obtaining $(x_{23},y_{-13},z_{-1-2})$ from the initial data $(x,y,z)$ lead to the same result and vice versa.

The proof of the rest of the items of this Proposition follows in a similar manner.

\section{Compatibility of the systems of equations  on a sector of fcc lattice}
Having three point relations it is more adequate to consider them on graphs that consist of triangles rather than of quadrilaterals. E.g. consider sector of fcc lattice
\[ {\mathcal S}=\left\{o+m_1 \vec{e_1} +m_2 \left(\frac{1}{2}\vec{e_1}+\frac{\sqrt{3}}{2} \vec{e_2} \right)+m_3
\left(\frac{1}{2}\vec{e}_1+\frac{\sqrt{3}}{6} \vec{e}_2 + \frac{\sqrt{6}}{3}  \vec{e}_3\right) \in {\mathbb E}^3 \, | \, m_1, m_2, m_3 \in {\mathbb N}\right\} \]
where $o$ is given point of Euclidean space (origin) and $( \vec{e}_1 , \vec{e}_2 , \vec{e}_3)$ is a given orthonormal basis of the tangent space.

Prescribing values of function $\phi$ at points $(m_1,0,0)$, $(0,m_2,0)$, $(0,0,m_3)$  where $m_1, m_2, m_3 \in {\mathbb N}$ and using  triangular recurrences we are talking about, one can propagate the solution to every point of 
${\mathcal S}$. Each of the systems $Q_{I}$,  $Q_{II}$ and $Q_{III}$ are of the form 
\[ \phi_{12}=f(\phi_1, \phi_2;{\mathbf a}^{12}) ,\quad \phi_{13} =g(\phi_1, \phi_3;{\mathbf a}^{13}), \quad \phi_{23}=h(\phi_2, \phi_3;{\mathbf a}^{23})\] 
where ${\mathbf a}^{ij}$ stands for set of parameters that depend only on $m_i$ and $m_j$ variables.
All the systems are compatible i.e. if we calculate
\[ \phi_{123}=f(\phi_{13}, \phi_{23};{\mathbf a}^{12}),\quad \phi_{132}=g(\phi_{12}, \phi_{23};{\mathbf a}^{13}), \quad \phi_{231}=h(\phi_{23}, \phi_{13};{\mathbf a}^{23}) \] 
then we will see that  \[\phi_{123} =\phi_{132}=\phi_{231}\]  holds. 


\end{document}